\documentclass[aps,pra,superscriptaddress,twocolumn,nofootinbib]{revtex4-2} 
\pdfoutput=1 
\usepackage{dsfont} 
\usepackage{comment} 
\usepackage[utf8]{inputenc} 
\usepackage[T1]{fontenc} 
\usepackage[british]{babel} 
\usepackage[dvipsnames,x11names]{xcolor} 
\usepackage{amsmath,amssymb,amsthm,bm,amsfonts,bbm} 
\usepackage{mathtools} 
\usepackage[colorlinks=true,citecolor=Green,linkcolor=Purple,urlcolor=Blue]{hyperref} 
\usepackage{physics}
\usepackage{amsfonts}


\newcommand{\id}{\mathds{1}}
\newcommand{\lam}{\vec{\lambda}}
\newcommand{\vai}{\vec{a}_i}

\DeclareMathOperator{\sgn}{sgn}
\newtheorem{lemma}{Lemma}

\begin{document}
\title{Compatibility of Generalized Noisy Qubit Measurements}


\author{Martin J. Renner}
    \affiliation{University of Vienna, Faculty of Physics, Vienna Center for Quantum Science and Technology (VCQ), Boltzmanngasse 5, 1090 Vienna, Austria}
    \affiliation{Institute for Quantum Optics and Quantum Information (IQOQI), Austrian Academy of Sciences, Boltzmanngasse 3, 1090 Vienna, Austria}


\date{\today}

\begin{abstract}
It is a crucial feature of quantum mechanics that not all measurements are compatible with each other. However, if measurements suffer from noise they may lose their incompatibility. Here, we consider the effect of white noise and determine the critical visibility such that all qubit measurements, i.e. all positive operator-valued measures (POVMs), become compatible, i.e. jointly measurable. In addition, we apply our methods to quantum steering and Bell nonlocality. We obtain a tight local hidden state model for two-qubit Werner states of visibility $1/2$. This determines the exact steering bound for two-qubit Werner states and also provides a local hidden variable model that improves on previously known ones. Interestingly, this proves that POVMs are not more powerful than projective measurements to demonstrate quantum steering for these states.
\end{abstract}

\maketitle

\textit{Introduction.---} Quantum mechanics provides a remarkably accurate framework for predicting the outcomes of experiments and has led to the development of numerous technological advancements. Despite its successes, it presents us with puzzling and counterintuitive phenomena that challenge our classical notions of reality. One of the key aspects that set quantum mechanics apart from classical physics is the concept of measurement incompatibility. In classical physics, measuring one property of a system need not affect the measurement of another property. In quantum mechanics, however, the situation is radically different. The uncertainty principle, formulated by Werner Heisenberg, establishes a fundamental limit to the precision with which certain pairs of properties can be simultaneously known \cite{Heisenberg1927}.

A simple and well-known example is the fact that we cannot simultaneously measure the spin of a particle in two orthogonal directions. It is known that incompatible measurements are at the core of many quantum information tasks. For example, they are necessary to violate Bell inequalities \cite{Fine1982, Fine1982b, Wolf2009} and necessary to provide an advantage in quantum communication \cite{Carmeli2020, FrenkelWeiner2015, Saha2023} or state discrimination tasks \cite{Carmeli2019, Uola2019, Skrzypczyk2019} (see also the reviews \cite{IncompatibilityReview2015, Compatibilityreview202}).

However, measurement devices always suffer from imprecision. Therefore, an apparatus measures in practice only a noisy version of the measurements. If the noise gets too large, these noisy measurements can become compatible even though they are incompatible in the noiseless limit~\cite{Busch1986}. In that case, the statistics of these noisy measurements can be obtained from the statistics of just a single measurement, and we say that these noisy measurements are jointly measurable. However, a detector that can only perform compatible measurements has limited power. Most importantly, it cannot be used for many quantum information processing tasks like demonstrating Bell-nonlocality since these require incompatible measurements. It is therefore important to ask, how much noise can be tolerated before all measurements become jointly measurable.

\begin{figure}[]
    \centering
    \includegraphics[width=0.7\columnwidth]{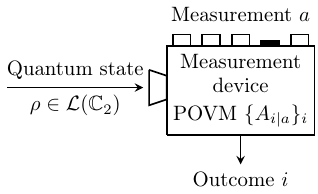}
    \caption{A measurement device can perform different measurements (labeled with $a$) that produce an outcome $i$. If the measurements are too noisy they can be simulated by a device that just performs a single measurement. In this work, we address the question of how much white noise can be tolerated before all qubit measurements become jointly measurable.} \label{fig:JM}
\end{figure}

In this work, we study the effect of white noise and show that all qubit measurements become jointly measurable at a critical visibility of $1/2$.
This result has direct implications for related fields of quantum information, in particular, Bell nonlocality \cite{Bell, Brunner_2014} and quantum steering \cite{EPR1935, Wiseman2007, He2013QKD, Bowles2014, Cavalcanti2017, SteeringReview2020, Sekatski2023}. More precisely, we use the close connection between joint measurability and quantum steering \cite{Quintino2014, Roope2014, Uola2015} to show that the two-qubit Werner state~\cite{Werner1989}
\begin{align}
    \rho^\eta_{W}=\eta\ \ketbra{\Psi^-} + (1-\eta)\ \id/4
\end{align}
cannot demonstrate EPR-steering if $\eta\leq 1/2$. Here, $\ket{\Psi^-}=\frac{1}{\sqrt{2}}(\ket{01}-\ket{10})$ denotes the two-qubit singlet state. This also implies that the same state does not violate any Bell inequality for arbitrary positive operator-valued measurements (POVM) applied on both sides whenever $\eta\leq 1/2$.

\textit{Notation and joint measurability.---} Before we introduce the problem, we introduce the necessary notation. Qubit states are described by positive semidefinite $2\times 2$ complex operators $\rho\in\mathcal{L}(\mathbb{C}_{2})$, $\rho\geq 0$ with unit trace $\tr[\rho]=1$. They can be represented as $\rho=\left(\id+\vec{x}\cdot \vec{\sigma}\right)/2$, where $\vec{x}\in\mathbb{R}^3$ is a three-dimensional real vector such that  $|\vec{x}|\leq 1$, and $\vec{\sigma}=(\sigma_x,\sigma_y,\sigma_z)$ are the standard Pauli matrices. In this notation, $\vec{x}$ is the corresponding Bloch vector of the qubit state. General qubit measurements are described by a positive operator-valued measure (POVM), which is a set of positive semidefinite operators $A_{i|a}\geq 0$ that sum to the identity, $\sum_i A_{i|a}=\mathds{1}$. Here, we use the label $"a"$ to distinguish between different measurements, while $"i"$ denotes the outcome of a given POVM (see also Fig.~\ref{fig:JM}). In quantum theory, the probability of outcome $i$ when performing the POVM with elements $A_{i|a}$ on the state $\rho$ is given by Born's rule,
\begin{align}
	p(i|a,\rho)=\tr[A_{i|a}\ \rho]\, .\label{Bornlhs}
\end{align}
Because every qubit POVM can be written as a coarse-graining of rank-1 projectors~\cite{Barrett2002}, we may restrict ourselves to POVMs proportional to rank-1 projectors. (We could also restrict ourselves to POVMs with at most four outcomes \cite{DAriano2005} but this is not necessary in what follows.) Thus, we write the measurements as $A_{i|a}=p_i \ketbra{\vec{a}_i}$, where $p_i\geq 0$ and $\ketbra{\vec{a}_i}=\big(\mathds{1} + \vec{a}_i \cdot \vec{\sigma}\big)/2$ for some normalized vector $\vec{a}_i\in\mathbb{R}^3$ ($|\vai|=1$). As a consequence of $\sum_i A_{i|a}=\mathds{1}$ we obtain $\sum_i p_i=2$ and $\sum_i p_i\ \vai = \vec{0}$.

These expressions are valid if all measurements are perfectly implemented. However, noise is usually unavoidable in experiments. In this work, we study the effect of white noise, where $\eta$ denotes the visibility. More formally, we define the noisy measurements as:
\begin{align}
    A^\eta_{i|a}=\eta\ A_{i|a} + (1-\eta) \ \tr[A_{i|a}]\ \id/2 \, .
\end{align}
With the notation introduced above the POVM elements become $A^\eta_{i|a}=p_i \big(\mathds{1} +\eta \ \vec{a}_i \cdot \vec{\sigma}\big)/2$. The goal of this work is to determine the critical value of $\eta$ such that all qubit POVMs become jointly measurable.

A set of measurements $\{A_{i|a}\}_{i,a}$ is jointly measurable if there exists a single measurement (so-called parent POVM) $\{G_\lambda\}_\lambda$ such that the statistics of all measurements in the set can be obtained by classical post-processing of the data of that single parent measurement. More precisely, if for every POVM in the set there exist conditional probabilities $p(i|a,\lambda)$ (with $0\leq p(i|a,\lambda)\leq 1$ and $\sum_i p(i|a,\lambda)=1$) such that:
\begin{align}
    A_{i|a}=\sum_\lambda p(i|a,\lambda)\ G_\lambda \, .
\end{align}
If this is satisfied, all measurements in the set can be simulated by the single parent POVM with operators $G_\lambda$: First, the parent POVM is measured on the quantum state $\rho$ in which outcome $\lambda$ occurs with probability $p(\lambda|\rho)=\tr[G_\lambda\ \rho]$. Second, given the POVM labeled by "a" that we want to simulate, the outcome $i$ is produced with probability $p(i|a,\lambda)$. In total, the probability of outcome $i$ becomes:
\begin{align}
    \sum_\lambda p(i|a,\lambda) p(\lambda|\rho)=\sum_\lambda p(i|a,\lambda) \tr[G_\lambda\ \rho]= \tr[A_{i|a}\ \rho] \, .
\end{align}
Here, we used the linearity of the trace. This perfectly simulates a given POVM with elements $\{A_{i|a}\}_i$, since this is the same expression as if the measurement was directly performed on the quantum state $\rho$ given in Eq.~\eqref{Bornlhs}.

The most prominent example are the two noisy spin-measurements $A^\eta_{\pm|x}=(\id\pm 1/\sqrt{2}\  \sigma_x)/2$ and $A^\eta_{\pm|z}=(\id\pm 1/\sqrt{2}\ \sigma_z)/2$ where $\eta=1/\sqrt{2}$. We can consider the following measurement with four outcomes $\lambda=(i,j)$ where $i,j\in\{+1,-1\}$:
\begin{align}
    G_{(i,j)}=\frac{1}{4}\left(\id +\frac{i}{\sqrt{2}} \sigma_x+\frac{j}{\sqrt{2}} \sigma_z\right)\, .
\end{align}
One can check that this is a valid POVM and that $A^\eta_{i|x}=\sum_j G_{(i,j)}$ as well as $A^\eta_{j|z}=\sum_i G_{(i,j)}$. Therefore, the statistics of both measurements $\{A^\eta_{i|x}\}_i$ and $\{A^\eta_{j|z}\}_j$ can be obtained from the statistics of just a single parent measurement. Now we consider not only two but the set of all noisy qubit POVMs $\{A^\eta_{i|a}\}_{i,a}$ and show that for $\eta\leq 1/2$ this set becomes jointly measurable.



\begin{figure*}[hbt!]
    \centering
    \includegraphics[width=1.0\textwidth]{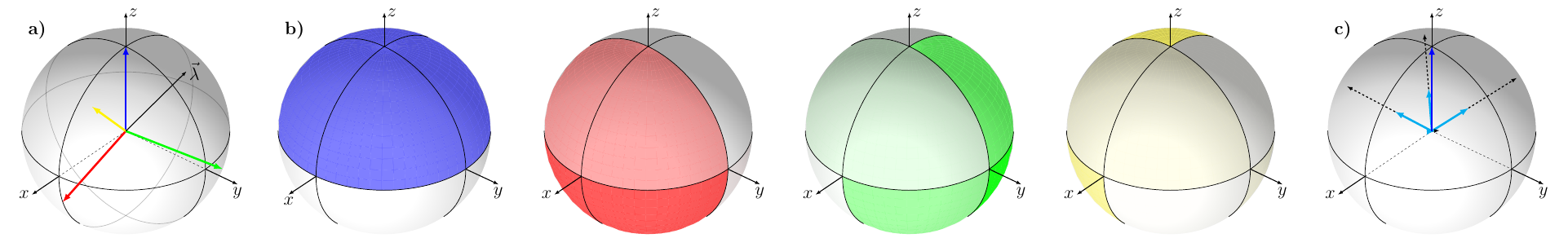}
    \caption{An illustration for a SIC-POVM \cite{SIC}: \textbf{a)} The different outcomes $i$ are represented with different colors and the colored vectors represent $\vai$ (note also $p_i=1/2$ for $i=1,2,3,4$). \textbf{b)} The opacity of the colors represents the probability to output $i$ given that $\lam$ lies in that region of the sphere, hence $p(i|a, \lam)$. This function is constant in each octant of the chosen frame, which is simply the standard coordinate frame in this case. For the $\lam$ shown in the left sphere ($s_x=-1$, $s_y=s_z=+1$), the outcome is most likely blue ($50\%$) or green ($49\%$). \textbf{c)}  Collecting all results $\lam$ from one octant behaves like the operator $G_{s_xs_ys_z}$ represented by the cyan arrows for the blue outcome. The sum of these operators simulates the desired (blue) operator $A^{1/2}_{1|a}$. (more details in Appendix~\ref{SICexample})} \label{illsicmain}
\end{figure*}

\textit{The protocol.---} First, we define two functions. The first one is the sign function, which is defined as $\sgn{(x)}:=+1$ if $x\geq0$ and $\sgn{(x)}:=-1$ if $x<0$. Similarly, the function $\Theta(x)$ is defined as $\Theta(x):=x$ if $x\geq 0$ and $\Theta(x):=0$ if $x<0$ (or, $\Theta(x):=(|x|+x)/2$).

The parent POVM $\{G_{\vec{\lambda}}\}_{\lam}$ is the measurement with elements
\begin{align}
     G_{\vec{\lambda}}=\frac{1}{4\pi}(\id + \vec{\lambda}\cdot \vec{\sigma}) \, . \label{parentPOVM}
 \end{align}
Here, $\lam \in \mathbb{R}^3$ is a normalized vector uniformly distributed on the unit radius sphere $S_2$. Physically, this corresponds to a (sharp) projective measurement with outcome $\lam$, where the measurement direction is chosen Haar-random on the Bloch sphere~\cite{footnote}.

For a given POVM with operators $A^{1/2}_{i|a}=p_i \big(\mathds{1} +  1/2\ \vec{a}_i \cdot \vec{\sigma}\big)/2$ where $\sum_i p_i=2$, $|\vai|=1$, and $\sum_i p_i\ \vai = \vec{0}$, we define the following function that associates a real-valued number to each point in $\vec{x}\in \mathbb{R}^3$:
 \begin{align}
    f_a: \mathbb{R}^3 \to \mathbb{R}:\ f_a(\vec{x}):=\sum_i p_i \ \Theta (\vec{x}\cdot \vec{a}_i) \, .
\end{align}
Now, we choose an orthonormal coordinate frame of the Bloch sphere, defined by the three pairwise orthogonal unit vectors $\vec{x}',\vec{y}', \vec{z}' \in S_2$. In addition, we define the eight vectors $\vec{v}_{s_x s_y s_z}:=s_x\vec{x}'+s_y\vec{y}'+s_z\vec{z}'$ where $s_x,s_y,s_z\in \{+1,-1\}$. This frame shall be chosen such that $f_a(\vec{v}_{s_x s_y s_z})\leq 1$ for all of these eight vectors and we show below that one can always find such a coordinate frame. Note that the vectors $\vec{v}_{s_x s_y s_z}$ are the vertices of a cube with sidelength two that is centered at the origin of the Bloch sphere.


After choosing a suitable frame, we can define the conditional probabilities:
\begin{align}
    p(i|a, \lam)=p_i \ \Theta(\vec{a}_i\cdot \vec{v}_{s_xs_ys_z})+\frac{(1-f_a(\vec{v}_{s_xs_ys_z}))\alpha_i}{\sum_i \alpha_i} \, . \label{condprob}
\end{align}
Here, $\vec{v}_{s_xs_ys_z}$ is the vector with indices $s_k=\sgn{(\lam \cdot \vec{k}')}$ for $k\in \{x,y,z\}$. Hence, the three signs $s_k$ denote the octant of $\lam$ in the rotated frame defined by $\vec{x}', \vec{y}', \vec{z}'$. (Equivalently, $\vec{v}_{s_x s_y s_z}$ is the vertex of the cube closest to $\lam$.) In addition, $\alpha_i$ is defined as:
\begin{align}
    \alpha_i:=\frac{p_i}{2}\left(1-\frac{1}{4}\sum_{s_x,s_y,s_z=\pm 1} \Theta(\vec{a}_i\cdot \vec{v}_{s_xs_ys_z})\right) \, . \label{defalphai}
\end{align}

\textit{Idea of the protocol.---} Suppose for now that it is possible to find a suitable frame in which $f_a(\vec{v}_{s_x s_y s_z})\leq 1$ for all eight vectors $\vec{v}_{s_x s_y s_z}$. Since this part is more technical, we discuss it at the end of this section. We can check first that the conditional probabilities are indeed well-defined. Namely, they are positive and sum to one. Positivity follows from the fact that $p_i\geq 0$ and $\Theta(x)\geq 0$ (for all $x\in\mathbb{R}$). In addition, $f_a(\vec{v}_{s_xs_ys_z})\leq 1$ and the proof that $\alpha_i\geq 0$ is given in the Appendix~\ref{helplemma} (see Lemma~\ref{lhslemma1}~(2)). A quick calculation also shows that the probabilities sum to one:
\begin{align}
    \sum_i p(i|a, \lam)=f_a(\vec{v}_{s_xs_ys_z})+(1-f_a(\vec{v}_{s_xs_ys_z}))=1 \, .
\end{align}

Now we are in a position to show that
\begin{align}
    A^{1/2}_{i|a}=\int_{S_2} \mathrm{d}\lam \ p(i|a,\lam)\ G_{\lam} \, . \label{toshow}
\end{align}
We give the detailed proof in Appendix~\ref{identity} but sketch the main idea here. It is important to recognize that, the function $p(i|a, \lam)$ is the same for two different $\lam$ that lie in the same octant of the rotated frame $\vec{x}'$, $\vec{y}'$, $\vec{z}'$. Intuitively speaking, this leads to a coarse-graining of the measurement outcomes $\lam$ in each of these octants. These coarse-grained operators $G_{s_xs_ys_z}$ behave like a noisy measurement in the direction of the corresponding vector $\vec{v}_{s_xs_ys_z}$. More precisely, we calculate in Appendix~\ref{integrals} that:
\begin{align}
G_{s_xs_ys_z}:=&\int_{S_2|\sgn{(\lam \cdot \vec{k}')}=s_k} \mathrm{d}\lam \ G_{\lam} =\frac{\id}{8}+\frac{\vec{v}_{s_xs_ys_z}\cdot \vec{\sigma}}{16}\, .
\end{align}
With this definition, Eq.~\eqref{toshow} becomes:
\begin{align}
    A^{1/2}_{i|a}=\sum_{s_x,s_y,s_z=\pm 1} p(i|a,\lam) \ G_{s_xs_ys_z} \, .
\end{align}
Using the definition of $p(i|a,\lam)$ in Eq.~\eqref{condprob} and some algebra  (details in Appendix~\ref{identity}) this reduces to:
\begin{align}
    A^{1/2}_{i|a}=\sum_{s_x,s_y,s_z=\pm 1} p_i \ \Theta(\vec{a}_i\cdot \vec{v}_{s_xs_ys_z})\ G_{s_xs_ys_z} + \alpha_i \id \, .
    \label{identitymain}
\end{align}
In the end, we prove this identity by using a closely related geometric formula that decomposes $\vec{a}_i$ into the vectors $\vec{v}_{s_xs_ys_z}$ (see Appendix~\ref{helplemma}):
\begin{align}
    \sum_{s_x,s_y,s_z=\pm 1} \Theta(\vec{a}_i\cdot \vec{v}_{s_xs_ys_z})\ \vec{v}_{s_xs_ys_z}=4\  \vai \, .
\end{align}

The identity in Eq.~\eqref{identitymain} can be seen as the main idea of the protocol. We want to find a set of coarse-grained operators $G_{s_xs_ys_z}$ that can be used to decompose all the POVM elements $A^{1/2}_{i|a}$. The conditional probabilities $p(i|a, \lam)$ are then constructed according to this decomposition. The first term in $p(i|a, \lam)$, namely $p_i\ \Theta(\vec{v}_{s_xs_ys_z}\cdot \vec{a}_i)$ is the coefficient that comes from the decomposition of $A^{1/2}_{i|a}$ in terms of $G_{s_xs_ys_z}$. The second term in $p(i|a, \lam)$ is constructed to add the noise term $\alpha_i \id$.

To give an example, consider the blue vector in Fig.~\ref{illsicmain} for which $\vec{a}_1=(0,0,1)^T$ and $p_1=1/2$, hence $A^{1/2}_{1|a}=p_1 \big(\mathds{1} +  1/2\ \vec{a}_1 \cdot \vec{\sigma}\big)/2=\id/4+ \sigma_z/8$. It turns out that we can use the standard coordinate frame in which the cube vertices are simply $\vec{v}_{\pm \pm \pm}:=(\pm 1, \pm 1, \pm 1)^T$. Direct calculation shows that $p_1 \cdot \Theta(\vec{a}_1\cdot \vec{v}_{s_xs_ys_z})=1/2$ if $s_z=+1$ (and zero if $s_z=-1$) as well as $\alpha_1=0$. In addition, the coarse-grained operators become $G_{s_xs_ys_z}=\id/8 +(s_x \sigma_X+s_y \sigma_Y+s_z \sigma_Z)/16$. It is then easy to check that $1/2(G_{+++}+G_{+-+}+G_{-++}+G_{--+})=A^{1/2}_{1|a}$.

However, while the identity in Eq.~\eqref{identitymain} holds for any orthonormal frame, it can be translated into a protocol with well-defined probabilities only if $\sum_i p_i\cdot \Theta(\vec{v}_{s_xs_ys_z}\cdot \vec{a}_i)=f_a(\vec{v}_{s_x s_y s_z})\leq 1$ for all eight vertices of the cube. We show now that such a frame always exists. The proof has two steps. First, we show that for any such cube, it holds that:
\begin{align}
    \sum_{s_x, s_y, s_z =\pm 1} f_a(\vec{v}_{s_x s_y s_z})\leq 8 \, .\label{eqbound}
\end{align}
The second part of the proof uses a theorem by Hausel, Makai, and Szűcs \cite{hausel_makai_szűcs_2000} (see Theorem~1 in that reference) that applies to continuous real-valued functions on $S_2$ that have the additional property that $f(\vec{x})=f(-\vec{x})$. By using similar techniques as in Ref.~\cite{Renner2023}, we show in Appendix~\ref{propertiesfunction} that these conditions are indeed fulfilled. In their theorem, they show that there always exists a rotation of the cube such that the functional values coincide at all eight vertices of that cube. Hence, choosing the orthonormal frame according to that rotation, we obtain $f_a(\vec{v}_{s_x s_y s_z})=C$ for all $s_x, s_y, s_z \in \{+1,-1\}$. Combining this with the above bound in Eq.~\eqref{eqbound}, we get $8C\leq 8$ and therefore $f_a(\vec{v}_{s_x s_y s_z})\leq 1$ for that specific cube (see Appendix~\ref{propertiesfunction} for more details).

The theorem in Ref.~\cite{hausel_makai_szűcs_2000} is a special case of a family of so-called Knaster-type theorems. They state that for a given continuous real-valued function on the sphere, a certain configuration of points can always be rotated such that the functional values coincide at each of these points. Other interesting related results concerning $S_2$ are due to Dyson~\cite{Dyson1951}, Livesay~\cite{Livesay1954}, Floyd~\cite{Floyd1955}. Also, the well-known Borsuk–Ulam theorem is of this type~\cite{Borsuk_1933}.

We want to remark that we do not necessarily have to choose a cube in which all of these eight values coincide. It is only required that all of these eight values are smaller than one. Note that we do not give an explicit way to construct such a coordinate frame. However, in many cases, for instance, for POVMs with two or three outcomes, it turns out that an explicit construction can be found. We discuss this further in Appendix~\ref{appendixprl} and Appendix~\ref{specialcases} (see also there for further examples and more illustrations).

\textit{Local models for entangled quantum states.---} Now we apply the developed techniques to Bell nonlocality and quantum steering. Suppose Alice and Bob share a two-qubit Werner state \cite{Werner1989}:
\begin{align}
    \rho^\eta_{W}=\eta \ketbra{\Psi^-} + (1-\eta)\ \id/4 \, ,
\end{align}
where $\ket{\Psi^-}=\frac{1}{\sqrt{2}}(\ket{01}-\ket{10})$ denotes the two-qubit singlet. They can apply arbitrary local POVMs on their qubit. As before, we denote Alice's measurement operators with $A_{i|a}=p_i \ketbra{\vec{a}_i}=\big(\mathds{1} + \vec{a}_i \cdot \vec{\sigma}\big)/2$ (where $p_i\geq 0$, $|\vec{a}_i|=1$, $\sum_i p_i=2$, and $\sum_i p_i\ \vai=\vec{0}$). Similarly, Bob can perform an arbitrary POVM with elements $B_{j|b}$ that are defined analogously. Note, that Alice's and Bob's measurements are now completely arbitrary, i.e. they are not noisy. Instead, the entangled state is not pure but has a certain amount of white noise. The correlations when Alice and Bob apply local POVMs to this state become:
\begin{align}
    p(i,j|a,b)=&\tr[(A_{i|a}\otimes B_{j|b})\ \rho^\eta_{W}] \, .
\end{align}

We say that the state $\rho_{W}^\eta$ admits a local hidden variable model, if these correlations can be written as
\begin{align}
    p(i,j|a,b)=\int \mathrm{d}\lambda\ q(\lambda)\ p(i|a,\lambda) \ p(j|b,\lambda) \, ,
\end{align}
for some hidden variable $\lambda$ distributed according to $q(\lambda)$ and some conditional probabilities $p(i|a,\lambda)$ and $p(j|b,\lambda)$. If the state $\rho_{W}^\eta$ violates a Bell inequality, such a local hidden variable description cannot exist and we say that the state is nonlocal~\cite{Bell, Brunner_2014}. It is a fundamental question in Bell nonlocality, for which $\eta$ these correlations can violate a Bell inequality or admit a local hidden variable model. It is known, that two-qubit Werner states violate the CHSH inequality \cite{CHSH} for $\eta > 1/\sqrt{2} \approx 0.7071$. Vertesi showed that they violate another Bell inequality whenever $\eta > 0.7056$ \cite{Vertesi2008}.

On the other hand, Werner constructed in his seminal paper from 1989 a local model for all bipartite projective measurements if $\eta \leq 1/2$ albeit these states are entangled if $\eta>1/3$ \cite{Werner1989}. Later, this bound was improved by Acin, Toner, and Gisin, who showed that the state is local whenever $\eta \leq 1/K_G(3)$ \cite{AcinGrothen}. Here, $K_G(3)$ is the so-called Grothendieck constant of order three and the best current bound is by Designolle et al. $1.4367 \leq K_G(3) \leq 1.4546$ \cite{Designolle2023}. This implies that $\rho^{\eta}_W$ is local if $\eta\leq 0.6875$ and violates a Bell inequality if $\eta\geq 0.6961$. However, these local models only apply to projective measurements (where
$p_1=p_2=1$ and $\vec{a}_2=-\vec{a}_1$).

Considering general POVMs, Barrett found a local model for all POVMs whenever $\eta \leq 5/12$ \cite{Barrett2002}. Using a technique developed in Ref.~\cite{Hirsch2017, Oszmaniec2017}, the best bound is again by Ref.~\cite{Designolle2023} which shows that $\rho^{\eta}_W$ is local for all POVMs if $\eta\leq 0.4583$. Based on the connections made in Ref.~\cite{Quintino2014, Roope2014, Uola2015}, we can now show that whenever $\eta\leq 1/2$ we cannot violate any Bell inequality since all correlations can be described by the following local model.

Suppose Alice performs her measurement $\{A_{i|a}\}_i$ on the Werner state with $\eta=1/2$ and obtains outcome $i$. After doing so, Bob's qubit is precisely in the (unnormalized) post-measurement state:
\begin{align}
    \rho_B(i|a)=&\tr_A[(A_{i|a}\otimes \id) \rho^{1/2}_{W}]=p_i \big(\mathds{1} - \vec{a}_i \cdot \vec{\sigma}/2\big)/4 \, . \label{postmeasu}
\end{align}
It is now important to recognize that this state can be simulated with the same techniques as before, due to the duality of states and measurements. More precisely, consider the following protocol:
\begin{enumerate}
	\item  $\lam \in \mathbb{R}^3$ is a normalized vector, drawn randomly from the unit radius sphere $S_2$ (according to the Haar measure). Alice knows the vector $\lam$. Bob's system is in the pure qubit state $\rho_{\lam}={\big(\mathds{1} + \vec{\lambda}\cdot\vec{\sigma}\big)/2}$.
        \item Alice chooses her POVM with operators $A_{i|a}=p_i(\mathds{1} + \vec{a}_i \cdot \vec{\sigma})/2$. Now, she applies precisely the same steps as in the previous protocol for the given values of $p_i$, vectors $-\vai$ ("-" to account for the anticorrelations in the singlet), and $\lam$. Namely, she chooses a suitable frame and produces her outcome $i$ according to the conditional probabilities $p(i|-a,\lam)$ in Eq.~\eqref{condprob}. (The "-" in $-a$ accounts for the "-" in $-\vai$.)
	\item Bob chooses his POVM with elements $B_{j|b}$ and performs a quantum measurement on his state $\rho_{\lam}$.
\end{enumerate}

The distribution of the state $\rho_{\lam}$, namely $\frac{1}{8\pi}\big(\mathds{1} + \vec{\lambda}\cdot\vec{\sigma}\big)$, is the same expression as the one for the parent POVM in Eq.~\eqref{parentPOVM} (up to a factor of $2$ since states and measurements are normalized differently). Hence, if we sum over all the states where Alice outputs $i$, she samples precisely the state $p_i \big(\mathds{1} - \vec{a}_i \cdot \vec{\sigma}/2\big)/4$ (analog to $A^{1/2}_{i|a}=p_i(\mathds{1} + \vec{a}_i \cdot \vec{\sigma}/2)/2$ before). This matches exactly the expression in Eq.~\eqref{postmeasu}. Intuitively speaking, there is no difference for Bob's qubit if Alice performs the protocol above or performs the measurement on the actual Werner state for $\eta =1/2$. Therefore, when Bob applies his POVM, the resulting statistics are the same in both cases. Hence, the protocol above simulates the statistics of arbitrary POVMs applied to the state $\rho^{1/2}_{W}$ in a local way:\footnote{In the protocol, it seems that we need quantum resources to simulate the statistics. However, we can also assume that $\lam$ is known to both and then Bob can output $j$ with probability $p(j|b,\lam)=\tr[B_{j|b}\ \rho_{\lam}]$ using his knowledge of $\lam$ and his measurement operators $B_{j|b}$.}
\begin{align}
    \tr[(A_{i|a} \otimes B_{j|b})\ \rho_W^{1/2}]=\frac{1}{4\pi}\int_{S_2} \mathrm{d}\vec{\lambda}\ p(i|-a,\vec{\lambda}) \tr[B_{j|b}\ \rho_{\vec{\lambda}}] \, .
\end{align}

This model is even a so-called local hidden state model which implies that the state $\rho^{1/2}_{W}$ is not steerable \cite{EPR1935, Wiseman2007, SteeringReview2020}. In the most fundamental steering scenario, we consider two parties, Alice and Bob, that share an entangled quantum state. The question is, whether Alice can steer Bob's state by applying a measurement on her side. However, Bob wants to exclude the possibility that his system is prepared in a well-defined state that is known to Alice. Then, Alice could just use her knowledge of the "hidden state" to pretend to Bob that she can steer his state. However, in reality, they do not share any entanglement at all. This is precisely the case in the above protocol, proving that the state $\rho^{\eta}_W$ cannot demonstrate quantum steering whenever $\eta\leq 1/2$. This was known before for the restricted case of projective measurements $A_{\pm|a}=(\id \pm \vec{a}\cdot \vec{\sigma})/2$~\cite{Wiseman2007}. When general POVMs are considered, the best model so far is the one from Barrett~\cite{Barrett2002}, which was shown to be a local hidden state model by Quintino et al.~\cite{Quintino2015}. That model shows that $\rho_W^{\eta}$ cannot demonstrate steering if $\eta\leq 5/12$. Numerical evidence suggested that the same holds for all $\eta\leq 1/2$ \cite{Bavaresco2017, Chau2018, Nguyen2019}. Our model shows, that this is indeed the case.

On the other hand, if such a local hidden state model cannot exist we say that the state is steerable. It is known that the two-qubit Werner state can demonstrate steering whenever $\eta> 1/2$ \cite{Wiseman2007}. Therefore, the bound of $\eta= 1/2$ is tight. Due to the connection between steering and joint measureability \cite{Roope2014, Quintino2014, Uola2015}, $\eta= 1/2$ is also tight for the joint measureability problem, ensuring the optimality of our construction.

\textit{Conclusion.---} 
In this work, we provided tight bounds on how much white noise a measurement device can tolerate before all qubit measurements become jointly measurable. We considered the most general set of measurements (POVMs) and applied our techniques to quantum steering and Bell nonlocality. Exploiting the connection between joint measurability and steering \cite{Roope2014, Quintino2014, Uola2015}, we found a tight local hidden state model for two-qubit Werner states of visibility $\eta=1/2$. This solves Problem 39 on the page of Open quantum problems \cite{openproblem} (see also Ref.~\cite{Werner2014a}) and Conjecture~1 of Ref.~\cite{Chau2018}. An important direction for further research is the generalization to higher dimensional systems~\cite{Chau2020, Almeida2007}.

\section*{Acknowledgments}
Most importantly, I acknowledge Marco Túlio Quintino for important discussions and for introducing the problem to me. Furthermore, I acknowledge Haggai Nuchi for correspondence about the Knaster-type theorems and an anonymous referee for very useful suggestions. This research was funded in whole, or in part, by the Austrian Science Fund (FWF) through BeyondC with Grant-DOI 10.55776/F71.\\

\textit{Note added:} At the very last stage of this work, we became aware of the work by Yujie Zhang and Eric Chitambar~\cite{Zhang2023} that proves the same results with a different approach. Both works appeared the same week on arXiv and are published back to back in the same journal.

\section{Appendix: A note on the non-constructive nature of the protocol}\label{appendixprl}
Here, we would like to provide some additional information about the non-constructive nature of our protocol. We stress again that the theorem of Hausel, Makai, and Szűcs \cite{hausel_makai_szűcs_2000} only implies that a suitable coordinate frame exists but does not imply how to find one. However, in some cases, we can explicitly find a frame. 

Consider for instance the important special case of a POVM with only two outcomes which corresponds to a projective measurement. In that case, we have $p_1=p_2=1$ and $\vec{a}_2=-\vec{a}_1$, hence $A^{1/2}_{1|a}=(\id+ \vec{a}_1\cdot \vec{\sigma}/2)/2$ and $A^{1/2}_{2|a}=(\id- \vec{a}_1\cdot \vec{\sigma}/2)/2$. We can express the function $f_a(\vec{x})$ as $f_a(\vec{x})=\Theta(\vec{x}\cdot \vec{a}_{1})+\Theta(-\vec{x}\cdot \vec{a}_{1})=|\vec{x}\cdot \vec{a}_{1}|$. To find a suitable frame, we can choose the $x'$-axis to be aligned with $\vec{a}_{1/2}$, while the $y'$- and $z'$-axis are orthogonal to $\vec{a}_{1}$. In this way, $\vec{x}'=\vec{a}_{1}$ and direct calculation shows that $f_a(\vec{v}_{s_xs_ys_z})=1$ for all eight vertices $\vec{v}_{s_xs_ys_z}=s_x\vec{x}'+s_y\vec{y}'+s_z\vec{z}'$ as required. In addition, note that $\Theta(\vec{v}_{s_xs_ys_z}\cdot \vec{a}_{1})=1$ if $s_x=+1$ and $\Theta(\vec{v}_{s_xs_ys_z}\cdot \vec{a}_{1})=0$ if $s_x=-1$. Therefore, $\alpha_1=\alpha_2=0$ and the conditional probabilities translate precisely to $p(1|a, \lam)=1$ if $\lam\cdot \vec{a}_1\geq 0$ and $p(1|a, \lam)=0$ if $\lam\cdot \vec{a}_1< 0$ (and the analog expression for $i=2$). See Fig.~\ref{illproj} for an illustration.

The choice of the frame is unique up to an arbitrary rotation around the $x'$-axis (and a relabelling of the axes). To see this, note that the angle $\alpha$ between $\vec{a}_{1/2}$ and each cube vertex $\vec{v}_{s_xs_ys_z}$ must be at least $\alpha \geq \cos^{-1}{(1/\sqrt{3})}$ since $|\vec{a}_{1/2}|=1$, $|\vec{v}_{s_xs_ys_z}|=\sqrt{3}$ and $f_a(\vec{v}_{s_xs_ys_z})=|\vec{v}_{s_xs_ys_z}\cdot \vec{a}_{1}|=|\vec{v}_{s_xs_ys_z}| \cdot |\vec{a}_{1}| \cdot \cos{(\alpha)}$. Geometrically, this defines the cube uniquely up to a rotation around $\vec{a}_{1}$. Note, however, that such a rotation would not change the conditional probabilities $p(i|a,\lam)$ in the end. It is worth pointing out, that this construction becomes equivalent to the one of Werner~\cite{Werner1989}, which is known to be a tight local hidden state model for projective measurements~\cite{Wiseman2007} (and therefore also tight for the problem of joint measurability due to the close connection of these two fields~\cite{Quintino2014, Roope2014, Uola2015}).

\begin{figure}[ht!]
    \centering
    \includegraphics[width=1.0\columnwidth]{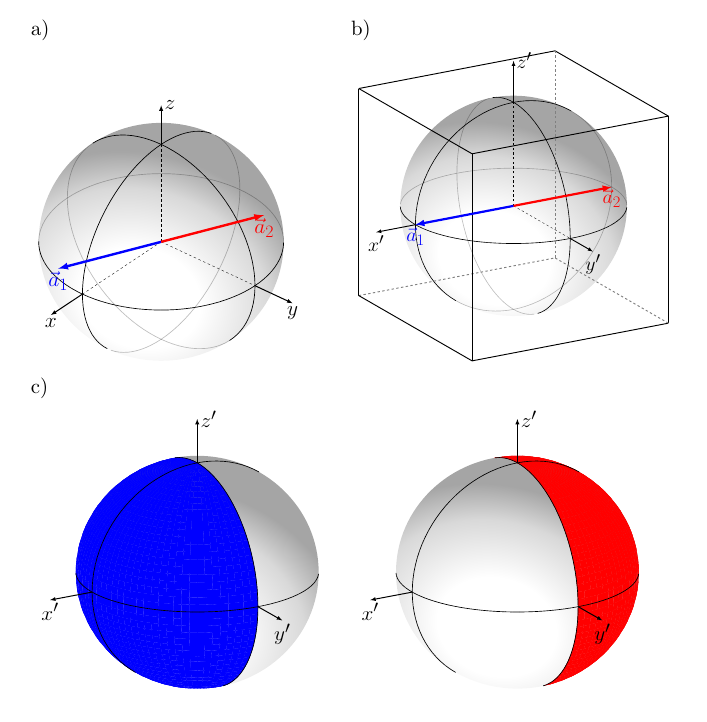}
    \caption{Construction for the two-outcome POVM with operators $A^{1/2}_{i|a}=(\id+ \vec{a}_i\cdot \vec{\sigma}/2)/2$ (with $\vec{a}_2=-\vec{a}_1$): \textbf{a)} Here, $\vec{a}_1$ can be an arbitrary direction in the Bloch sphere. \textbf{b)} We can choose the rotated frame such that the $x'$-axis is aligned with $\vec{a}_1$. We also show the corresponding cube here. \textbf{c)} The conditional probabilities $p(i|a,\lam)$ reduce precisely to $p(1|a,\lam)=1$ if $\lam\cdot \vec{a}_1\geq 0$ and $p(1|a, \lam)=0$ if $\lam\cdot \vec{a}_1< 0$ as indicated with the two colors. Hence, if the outcome $\lam$ of the parent POVM lies in the hemisphere centered around $\vec{a}_1$ (blue region) the outcome is always $i=1$ and if it lies in the hemisphere centered around $\vec{a}_2$ (red region), the outcome will be $i=2$.} \label{illproj}
\end{figure}

It turns out that for the case of three outcome POVMs, we can also construct a suitable coordinate frame without relying on the theorem of Hausel, Makai, and Szűcs \cite{hausel_makai_szűcs_2000} but only on the intermediate value theorem for continuous functions (see Appendix~\ref{specialcases}). For general POVMs, we want to point out that a suitable frame is computationally easy to find for many cases. For instance, we can parametrize a rotation by its three Euler angles. When we discretize the three angles into equally spaced values, we can search through many possible rotations and calculate the functional values for the corresponding cube. If we find a cube, for which all of these eight values are smaller than one, we have found a suitable frame. We provide a MATLAB code for this simple algorithm via GITHUB~\cite{GIthubMJR}. It turns out that even this brute-force method finds a suitable frame for most POVMs almost immediately. (However, more sophisticated algorithms, are likely to perform even better.)

We did some numerical simulations with random POVMs. For that, we generate random points on the sphere $\vec{a}_i\in S_2$ and find $p_i$ by solving $\sum_i p_i \cdot \vai=\vec{0}$ and $\sum_i p_i=2$. Then we use our algorithm to find a suitable frame. These numerical simulations strongly suggest that it is the hardest to find a frame if all directions are almost collinear ($|\vec{a}_i\cdot \vec{a}_j|\approx |\vec{a}_i|\cdot |\vec{a}_j|$ for all pairs $i,j$). Note that, the two outcome POVMs from above are precisely of that form. However, even in these cases, a frame was always found in which the largest of the eight values $f(\vec{v}_{s_xs_ys_z})$ is only slightly larger than 1 and this value can be further decreased by further discretizing the Euler angles. There is an intuitive explanation for this effect. For the simulation of a given POVM, it is always advantageous if a given $\lam$ is mapped to an outcome $i$ that is close, meaning that the angle between $\lam$ and $\vai$ is small. Consider for instance the case of a projective measurement discussed before (or a POVM with almost collinear vectors $\vai$). If $\lam$ is (almost) orthogonal to $\vec{a}_{1/2}$, the outcome of the parent measurement $\lam$ is (almost) uncorrelated to the $\vec{a}_{1/2}$ measurement but it still has to be mapped to either $\vec{a}_{1}$ or $\vec{a}_{2}$.

Contrary to that, for a POVM with more outcomes $\vai$ spread over the Bloch sphere (like the symmetric, informationally complete (SIC) POVM in Fig.~\ref{illsicmain}), there are more options a given $\lam$ can be mapped to. Roughly speaking it is then more likely to find a measurement outcome $\vai$ that is highly correlated with the actual measurement outcome of the parent POVM $\lam$. Based on this intuition, it is reasonable to expect that these POVMs are easier to simulate. In our construction, this expresses itself in the fact that for these POVMs many different coordinate frames are suitable, and therefore several different simulations for such a POVM and $\eta=1/2$ exist. We can even prove, that for the case of the four-outcome SIC-POVM \cite{SIC}, any rotation can be chosen (see Appendix~\ref{appsicevery}). On the contrary, $\eta=1/2$ is known to be tight for the special case of two-outcome POVMs ~\cite{Wiseman2007}, and therefore only very particular coordinate frames are possible (similar for collinear POVMs).

Note also, that we do not exclude the possibility that for certain POVMs better constructions with $\eta>1/2$ exist. For instance, SIC-POVMs are by definition very symmetric and one would expect that a symmetric model gives an even better bound $\eta>1/2$ (e.g., one can map $\lam$ to the closest outcome $\vai$ of the SIC-POVM). However, in this work, we are merely concerned with finding one construction that works for all POVMs and $\eta=1/2$. Hence, it is more important for our approach to recover the hemisphere construction of Fig.~\ref{illproj} (which is known to be tight for projective measurements) than to maintain the symmetry of a given POVM.


\nocite{apsrev42Control} 
\bibliographystyle{0_MTQ_apsrev4-2_corrected}
\bibliography{bib.bib}

\onecolumngrid

\appendix

\section{A helpful lemma} \label{helplemma}
\begin{lemma}\label{lhslemma1}
    Given the eight vectors $\vec{v}_{s_x s_y s_z}$ forming a cube of sidelength two centered at the origin of the Bloch sphere and an arbitrary vector $\vec{a}\in \mathbb{R}^3$. In addition, the function $\Theta(x)$ (for $x\in \mathbb{R}$) is defined as $\Theta(x):=x$ if $x\geq0$ and $\Theta(x):=0$ if $x<0$ (equivalent: $\Theta(x):=(|x|+x)/2$). We prove the following properties:
    \begin{align}
        (1)\, &\sum_{s_x,s_y,s_z=\pm 1} |\vec{v}_{s_x s_y s_z}\cdot \vec{a}|\leq  8\cdot |\vec{a} |\\
        (2)\, &\sum_{s_x,s_y,s_z=\pm 1} \Theta(\vec{v}_{s_xs_ys_z}\cdot \vec{a})\leq  4\cdot |\vec{a}|\\
        (3)\, &\sum_{s_x,s_y,s_z=\pm 1}(\vec{v}_{s_xs_ys_z}\cdot \vec{a})\  \vec{v}_{s_xs_ys_z}=8\cdot \vec{a}\\
        (4)\, &\sum_{s_x,s_y,s_z=\pm 1}\Theta(\vec{v}_{s_xs_ys_z}\cdot \vec{a})\  \vec{v}_{s_xs_ys_z}=4\cdot \vec{a}
    \end{align}
\end{lemma}
\begin{proof}
Note that all four statements remain the same under a rotation of the coordinate system. Hence, it is sufficient to prove them in the rotated frame $\vec{x}'$, $\vec{y}'$, $\vec{z}'$ in which $\vec{v}_{s_x s_y s_z}$ has the coordinates $\vec{v}'_{s_x s_y s_z}=(s_x, s_y, s_z)^T$ where $s_x,s_y,s_z\in \{+1,-1\}$. For ease of notation, we write all vectors in the proof without the prime.\\

(1) We apply the Cauchy-Schwarz inequality to the following two eight-dimensional vectors:
\begin{align}
    \begin{pmatrix} 1 \\ 1 \\ \vdots \\ 1 \end{pmatrix} \cdot \begin{pmatrix} |\vec{v}_{+++}\cdot \vec{a}| \\ |\vec{v}_{++-}\cdot \vec{a}| \\ \vdots  \\ |\vec{v}_{---}\cdot \vec{a}| \end{pmatrix} & \leq \abs{\begin{pmatrix} 1 \\ 1 \\ \vdots \\ 1 \end{pmatrix}} \cdot \abs{\begin{pmatrix} |\vec{v}_{+++}\cdot \vec{a}| \\ |\vec{v}_{++-}\cdot \vec{a}| \\ \vdots  \\ |\vec{v}_{---}\cdot \vec{a}| \end{pmatrix}}\\
    \sum_{s_x,s_y,s_z=\pm 1} |\vec{v}_{s_x s_y s_z}\cdot \vec{a}| & \leq \sqrt{8}\cdot \sqrt{\sum_{s_x,s_y,s_z=\pm 1} |\vec{v}_{s_x s_y s_z}\cdot \vec{a}|^2}
\end{align}
Now we rewrite the right-hand side. Here we denote $\vec{a}=(a_x,a_y,a_z)^T$:
\begin{align}
    \sqrt{\sum_{s_x,s_y,s_z=\pm 1} |\vec{v}_{s_x s_y s_z}\cdot \vec{a}|^2}
    =& \sqrt{\sum_{s_x,s_y,s_z=\pm 1} (\vec{v}_{s_x s_y s_z}\cdot \vec{a})^2}\\
    =& \sqrt{(a_x+a_y+a_z)^2+(a_x+a_y-a_z)^2+...+(-a_x-a_y-a_z)^2}\\
    =& \sqrt{8\ a_x^2+8\ a_y^2+8\ a_z^2}\\
    =& \sqrt{8}\cdot \sqrt{a_x^2+a_y^2+a_z^2}\\
    =& \sqrt{8}\cdot |\vec{a} |
\end{align}
Note, that in the third line, all terms of the form $2a_xa_y$, $2a_xa_z$ or $2a_ya_z$ cancel each other out since each of these terms appear four times with a plus sign and four times with a minus sign. In total, we obtain the desired inequality:
\begin{align}
    \sum_{s_x,s_y,s_z=\pm 1} |\vec{v}_{s_x s_y s_z}\cdot \vec{a}|\leq  8\cdot |\vec{a} | \, . \label{usefulidentitiy}
\end{align}

(2) The second inequality is a consequence of the first one. 
\begin{align}
    \sum_{s_x,s_y,s_z=\pm 1} \Theta(\vec{a}\cdot \vec{v}_{s_xs_ys_z})= \frac{1}{2} \sum_{s_x,s_y,s_z=\pm 1} |\vec{a}\cdot \vec{v}_{s_xs_ys_z}|\leq \frac{8}{2} |\vec{a}|=4 |\vec{a}| \, .
\end{align}
Here, we used that:
\begin{align}
    \Theta(\vec{a}\cdot \vec{v}_{s_xs_ys_z})+\Theta(\vec{a}\cdot \vec{v}_{-s_x-s_y-s_z})=|\vec{a}\cdot \vec{v}_{s_xs_ys_z}|=\frac{1}{2}(|\vec{a}\cdot \vec{v}_{s_xs_ys_z}|+|\vec{a}\cdot \vec{v}_{-s_x-s_y-s_z}|) \, ,
\end{align}
which follows from $\Theta(x)+\Theta(-x)=|x|$ and $\vec{a}\cdot \vec{v}_{s_xs_ys_z}=-\vec{a}\cdot \vec{v}_{-s_x-s_y-s_z}$ (as a consequence of $\vec{v}_{-s_x-s_y-s_z}=-\vec{v}_{s_xs_ys_z}$).

(3) This property is a rather straightforward calculation. If we denote $\vec{a}=(a_x,a_y,a_z)^T$ and remember that 
$\vec{v}_{s_xs_ys_z}=(s_x,s_y,s_z)^T$, we obtain:
\begin{align}
    \sum_{s_x,s_y,s_z=\pm 1}(\vec{v}_{s_xs_ys_z}\cdot \vec{a})\  \vec{v}_{s_xs_ys_z}
    =&\left((a_x+a_y+a_z) \begin{pmatrix} 1 \\ 1 \\ 1 \end{pmatrix}+(a_x+a_y-a_z) \begin{pmatrix} 1 \\ 1 \\ -1 \end{pmatrix}+...+(-a_x-a_y-a_z) \begin{pmatrix} -1 \\ -1 \\ -1 \end{pmatrix}\right)\\
    =&\begin{pmatrix} 8a_x \\ 8a_y \\ 8a_z \end{pmatrix} = 8\cdot \vec{a}\, .
\end{align}

(4) For this, we note that $\vec{v}_{-s_x-s_y-s_z}=(-s_x,-s_y,-s_z)^T=-(s_x,s_y,s_z)^T=-\vec{v}_{s_xs_ys_z}$ and therefore $\vec{v}_{-s_x-s_y-s_z}\cdot \vec{a}=-\vec{v}_{s_xs_ys_z}\cdot \vec{a}$. Combining both, we obtain:
\begin{align}
    (\vec{v}_{-s_x-s_y-s_z}\cdot \vec{a})\ \vec{v}_{-s_x-s_y-s_z}=(-\vec{v}_{s_xs_ys_z}\cdot \vec{a})\ (-\vec{v}_{s_xs_ys_z})=(\vec{v}_{s_xs_ys_z}\cdot \vec{a})\ \vec{v}_{s_xs_ys_z}\, .
\end{align}
In addition, since $\Theta(x)-\Theta(-x)=x$ and again $\vec{v}_{-s_x-s_y-s_z}=-\vec{v}_{s_xs_ys_z}$ we observe:
\begin{align}
    \Theta(\vec{a}\cdot \vec{v}_{s_xs_ys_z})\ \vec{v}_{s_xs_ys_z}+\Theta(\vec{a}\cdot \vec{v}_{-s_x-s_y-s_z})\ \vec{v}_{-s_x-s_y-s_z}=&\Theta(\vec{a}\cdot \vec{v}_{s_xs_ys_z})\ \vec{v}_{s_xs_ys_z}-\Theta(-\vec{a}\cdot \vec{v}_{s_xs_ys_z})\ \vec{v}_{s_xs_ys_z}\\
    =&(\Theta(\vec{a}\cdot \vec{v}_{s_xs_ys_z})-\Theta(-\vec{a}\cdot \vec{v}_{s_xs_ys_z}))\ \vec{v}_{s_xs_ys_z}\\
    =&(\vec{a}\cdot \vec{v}_{s_xs_ys_z})\ \vec{v}_{s_xs_ys_z}\\
    =&\frac{1}{2}((\vec{a}\cdot \vec{v}_{s_xs_ys_z})\ \vec{v}_{s_xs_ys_z}+(\vec{a}\cdot \vec{v}_{-s_x-s_y-s_z})\ \vec{v}_{-s_x-s_y-s_z})
\end{align}
This implies together with property (3):
\begin{align}
    \sum_{s_x,s_y,s_z=\pm 1} \Theta(\vec{v}_{s_xs_ys_z}\cdot \vec{a})\  \vec{v}_{s_xs_ys_z} = \frac{1}{2} \sum_{s_x,s_y,s_z=\pm 1} (\vec{v}_{s_xs_ys_z}\cdot \vec{a})\  \vec{v}_{s_xs_ys_z} = 4\cdot \vec{a} \, .
\end{align}
\end{proof}

\section{Properties of the function $f_a$}\label{propertiesfunction}

The theorem by Hausel, Makai, and Szűcs \cite{hausel_makai_szűcs_2000} states that for every real-valued and continuous function on the two-sphere $S_2$ that has the additional property that it is even, i.e. $f(\vec{x})=f(-\vec{x})$, there exists an inscribed cube with all vertices lying on the sphere $S_2$ such that the function $f(\vec{x})$ has the same value on each vertex of that cube. We show now that these properties are fulfilled by the function $f_a(\vec{x})$. Strictly speaking, we apply the theorem not to the unit sphere but to the sphere with radius $|\vec{v}_{s_xs_ys_z}|=\sqrt{3}$. Alternatively, we can also apply the theorem to the unit sphere and the vectors $\vec{v}_{s_xs_ys_z}/\sqrt{3}$. Since the function satisfies $f_a(\gamma\cdot \vec{x})=\gamma \cdot  f_a(\vec{x})$ for $\gamma \geq 0$ (see below), the values of $f_a(\vec{v}_{s_xs_ys_z})$ coincide if and only if the values of $f_a(\vec{v}_{s_xs_ys_z}/\sqrt{3})$ coincide.


\begin{lemma}\label{lemmafunction}
    We consider the function $f_a: \mathbb{R}^3 \mapsto \mathbb{R}:\ f_a(\vec{x})=\sum_i p_i \ \Theta (\vec{x}\cdot \vec{a}_i)$. Here,  $|\vai |=1 \ \forall i$, $\sum_i p_i=2$ and $\sum_i p_i\  \vai=\vec{0}$. In addition, the function $\Theta(x)$ is defined as $\Theta(x):=x$ if $x\geq0$ and $\Theta(x):=0$ if $x<0$ (equivalently: $\Theta(x):=(|x|+x)/2$). The function $f_a$ satisfies:
    \begin{align}
        (1)\, f_a(\vec{x})=\frac{1}{2} \sum_{i} p_i \ |\vec{a}_i \cdot \vec{x}|&&\text{and}&&
        (2)\, \sum_{s_x, s_y, s_z =\pm 1} f_a(\vec{v}_{s_x s_y s_z})\leq 8 \, .
    \end{align}
    Here, again the eight vectors $\vec{v}_{s_x s_y s_z}$ with $s_x,s_y,s_z\in \{+1,-1\}$ form the vertices of a cube with sidelength two centered at the origin of the Bloch sphere. From property (1) we can conclude furthermore, that $f_a(\vec{x})$ is a sum of continuous functions and therefore continuous (also when restricted to the sphere $S_2$) that satisfies  $f_a(-\vec{x})=f_a(\vec{x})$ or more generally $f_a(\gamma \cdot \vec{x})=|\gamma|\cdot f_a(\vec{x})$ for every $\gamma\in \mathbb{R}$.
\end{lemma}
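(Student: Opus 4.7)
The plan is to establish (1) first by a one-line manipulation, then derive (2) by substituting (1) into the sum and invoking Lemma~\ref{lemma1}(1).

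For property (1), I would use the identity $\Theta(x) = (|x|+x)/2$ that is noted in the lemma statement. Applying this term-by-term:
\begin{align}
f(\vec{x}) = \sum_i p_i\, \Theta(\vec{a}_i\cdot\vec{x}) = \tfrac{1}{2}\sum_i p_i\,|\vec{a}_i\cdot\vec{x}| + \tfrac{1}{2}\sum_i p_i\,(\vec{a}_i\cdot\vec{x}).
\end{align}
The second term can be rewritten as $\tfrac{1}{2}\bigl(\sum_i p_i\vec{a}_i\bigr)\cdot\vec{x}$, which vanishes by the hypothesis $\sum_i p_i\vec{a}_i=\vec{0}$. This proves (1), and the immediate corollaries ($f$ continuous on $S_2$, even, and positively/negatively homogeneous of degree one) follow at once from the elementary properties of $|\cdot|$.

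For property (2), I would plug the representation from (1) into the sum over the eight cube vertices and exchange the order of summation:
\begin{align}
\sum_{s_x,s_y,s_z=\pm 1} f(\vec{v}_{s_x s_y s_z}) = \tfrac{1}{2}\sum_i p_i \sum_{s_x,s_y,s_z=\pm 1} |\vec{a}_i\cdot \vec{v}_{s_x s_y s_z}|.
\end{align}
For each fixed $i$, Lemma~\ref{lemma1}(1) applied with $\vec{a}=\vec{a}_i$ bounds the inner sum by $8|\vec{a}_i|=8$, because each $\vec{a}_i$ is a unit vector. Hence the whole expression is at most $\tfrac{1}{2}\sum_i p_i\cdot 8 = 4\sum_i p_i = 8$, since $\sum_i p_i = 2$.

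There is no real obstacle here: the only conceptual move is recognizing that the normalization conditions on the $p_i$ and the mean-zero condition $\sum_i p_i\vec{a}_i=\vec{0}$ are exactly what is needed to (i) convert $\Theta$ into the absolute value and (ii) make the per-term Cauchy--Schwarz bound of Lemma~\ref{lemma1}(1) tight enough to yield the constant $8$. Continuity of $f$ on $S_2$ and the parity $f(-\vec{x})=f(\vec{x})$ are then free consequences of (1), so together with (2) they verify the hypotheses needed to invoke the Hausel--Makai--Sz\H{u}cs theorem in the main text.
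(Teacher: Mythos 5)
Your proof is correct and takes essentially the same approach as the paper: part (2) is identical, and for part (1) your direct use of $\Theta(x)=(|x|+x)/2$ together with $\sum_i p_i\vec{a}_i=\vec{0}$ is just a more compact packaging of the paper's two-step argument via $x=\Theta(x)-\Theta(-x)$ and $|x|=\Theta(x)+\Theta(-x)$.
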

\begin{proof}
(1) We show that the function $f_a(\vec{x})$ can be rewritten as follows:
\begin{align}
    f_a(\vec{x})=\sum_{i} p_i \ \Theta(\vec{a}_i \cdot \vec{x})=\frac{1}{2} \sum_{i} p_i \ |\vec{a}_i \cdot \vec{x}| \, .
\end{align}
We want to remark that exactly the same property appears also in Ref.~\cite{Renner2023}  (Appendix A, Lemma 2)  and we restate the proof here: We prove first that $\sum_{i} p_i\ \Theta(\vec{a}_i\cdot \vec{x})=\sum_{i} p_i\ \Theta(-\vec{a}_i\cdot \vec{x})$. Here, we use that $x=\Theta(x)-\Theta(-x)$ (for all $x\in \mathbb{R}$):
\begin{align}
    \vec{0}=\sum_{i} p_i\ \vec{a}_i\ \implies \
    0=\vec{0}\cdot \vec{x}=\sum_{i} p_i\ \vec{a}_i\cdot \vec{x}=\sum_{i} p_i\ (\Theta(\vec{a}_i\cdot \vec{x})-\Theta(-\vec{a}_i\cdot \vec{x}))=\sum_{i} p_i\ \Theta(\vec{a}_i\cdot \vec{x})-\sum_{i} p_i\ \Theta(-\vec{a}_i\cdot \vec{x}) \, .
\end{align}
In the second step, we use this observation and $|x|=\Theta(x)+\Theta(-x)$ (for all $x\in \mathbb{R}$) to calculate:
\begin{align}
    \sum_i p_i \ |\vec{a}_i \cdot \vec{x}|=\sum_i p_i \ (\Theta(\vec{a}_i \cdot \vec{x})+\Theta(-\vec{a}_i \cdot \vec{x}))=\sum_i p_i \ \Theta(\vec{a}_i \cdot \vec{x})+\sum_i p_i \ \Theta(-\vec{a}_i \cdot \vec{x})=2 \sum_i p_i \ \Theta(\vec{a}_i \cdot \vec{x}) \, .
\end{align}

(2) This is a direct consequence of property (1) in Lemma~\ref{lhslemma1} we proved above together with multiplying by $p_i$ and taking the sum over all $i$:
\begin{align}
    2\ \sum_{s_x, s_y, s_z =\pm 1} f_a(\vec{v}_{s_x s_y s_z})&= \sum_{s_x, s_y, s_z =\pm 1} \sum_i p_i\ |\vec{v}_{s_x s_y s_z}\cdot \vec{a}_i|= \sum_i p_i \sum_{s_x, s_y, s_z =\pm 1} |\vec{v}_{s_x s_y s_z}\cdot \vec{a}_i|\leq \sum_i p_i\cdot  8 \cdot |\vai |=16 \, .
\end{align}
Note that $|\vai|=1$ for all $i$ and $\sum_i p_i=2$.

\end{proof}

\section{Proof of the protocol}\label{identity}
In this section, we show that the protocol indeed simulates the noisy POVM with elements $A^{1/2}_{i|a}$. We have to show that
\begin{align}
    \int_{S_2} \mathrm{d}\lam \ p(i|a, \lam) G_{\lam} =\frac{p_i}{2}\left(\id +\frac{\vai\cdot \vec{\sigma}}{2}\right)=A^{1/2}_{i|a} \, .
\end{align}
We also restate the definitions from the main text here:
\begin{align}
    p(i|a, \lam)&:=p_i \cdot \Theta(\vec{a}_i\cdot \vec{v}_{s_xs_ys_z})+\frac{(1-f_a(\vec{v}_{s_xs_ys_z}))\alpha_i}{\sum_i \alpha_i}\, ,\\
    \alpha_i&:=\frac{p_i}{2}\left(1-\frac{1}{4}\sum_{s_x,s_y,s_z=\pm 1} \Theta(\vec{a}_i\cdot \vec{v}_{s_xs_ys_z})\right) \, .
\end{align}
It is important to recognize that the function $p(i|a, \lam)$ is constant in each octant of the rotated coordinate frame $x', y', z'$ since $\vec{v}_{s_xs_ys_z}$ is given by $s_k=\sgn{(\lam \cdot \vec{k}')}$ for $k\in \{x,y,z\}$. Intuitively speaking, we collect all the measurement results from one octant of the rotated frame together. This coarse-graining of the parent POVM can be calculated when we integrate over all vectors $\lam$ in the corresponding octant. We denote this as the operator $G_{s_xs_ys_z}$ that becomes (calculation in the next Subsection~\ref{integrals}):
\begin{align}
    G_{s_xs_ys_z}:=\int_{S_2|\sgn{(\lam \cdot \vec{k}')}=s_k} \mathrm{d}\lam \ G_{\lam} =\frac{1}{16}(2\cdot \id+\vec{v}_{s_xs_ys_z}\cdot \vec{\sigma})\, .
\end{align}
This operator behaves like a noisy measurement in the direction of the corresponding vector $\vec{v}_{s_xs_ys_z}$. Using this, the above integration reduces to:
\begin{align}
    \int_{S_2} \mathrm{d}\lam \ p(i|a, \lam)\ G_{\lam} =\sum_{s_x, s_y, s_z =\pm 1}  p(i|a, \lam)\  G_{s_xs_ys_z} \, ,
\end{align}
where we again note that the conditional probabilities $p(i|a, \lam)$ only depend on the signs of $\lam$ in the rotated frame. Now we can evaluate the right-hand side of this equation. We obtain:
\begin{align}
    \sum_{s_x, s_y, s_z =\pm 1}&  p(i|a, \lam)\  G_{s_xs_ys_z}=\sum_{s_x, s_y, s_z =\pm 1}p_i \ \Theta(\vec{a}_i\cdot \vec{v}_{s_xs_ys_z})\ G_{s_xs_ys_z} + \sum_{s_x, s_y, s_z =\pm 1} \frac{(1-f_a(\vec{v}_{s_xs_ys_z}))\alpha_i}{\sum_i \alpha_i}\ G_{s_xs_ys_z} \, .\label{twoterms}
\end{align}
We evaluate the first term first:
\begin{align}
    \sum_{s_x,s_y,s_z=\pm 1}  p_i  \ \Theta(\vec{a}_i\cdot \vec{v}_{s_xs_ys_z})\ G_{s_xs_ys_z}&=\frac{1}{8}\sum_{s_x,s_y,s_z=\pm 1} p_i \ \Theta(\vec{a}_i\cdot \vec{v}_{s_xs_ys_z})\ \id +\frac{1}{16}\sum_{s_x,s_y,s_z=\pm 1} p_i \ \Theta(\vec{a}_i\cdot \vec{v}_{s_xs_ys_z}) \vec{v}_{s_xs_ys_z}\cdot \vec{\sigma}\\
    &=\frac{1}{8}\sum_{s_x,s_y,s_z=\pm 1} p_i \ \Theta(\vec{a}_i\cdot \vec{v}_{s_xs_ys_z})\ \id +\frac{1}{4}p_i \ \vec{a}\cdot \vec{\sigma}\\
    &=\frac{p_i}{2}\left(\id +\frac{\vai\cdot \vec{\sigma}}{2}\right)-\alpha_i \id =A^{1/2}_{i|a}-\alpha_i \id \, .\label{ide}
\end{align}
Here we used property (4) in Lemma~\ref{lhslemma1} and the definition of $\alpha_i$ to rewrite the expression into the desired form. The second term becomes:
\begin{align}
    \sum_{s_x, s_y, s_z =\pm 1} \frac{(1-f_a(\vec{v}_{s_xs_ys_z}))\alpha_i}{\sum_i \alpha_i}\ G_{s_xs_ys_z}=\frac{\sum_{s_x, s_y, s_z =\pm 1}(1-f_a(\vec{v}_{s_xs_ys_z}))\alpha_i}{8\sum_i \alpha_i}\ \id=\alpha_i \id \, . \label{secterm}
\end{align}
To see this, we note that the coefficient in front of $G_{s_xs_ys_z}$ and $G_{-s_x-s_y-s_z}$ are the same since $f_a(\vec{v}_{s_xs_ys_z})=f_a(-\vec{v}_{s_xs_ys_z})=f_a(\vec{v}_{-s_x-s_y-s_z})$ and the $\alpha_i$ do not depend on $s_k$. In addition, $G_{s_xs_ys_z}+G_{-s_x-s_y-s_z}=\id/4=\id/8+\id/8$ which allows us to replace each $G_{s_xs_ys_z}$ by $\id/8$ in the first step. Furthermore, we observe that:
\begin{align}
    8\sum_i \alpha_i&=8\sum_i \frac{p_i}{2}\left(1-\frac{1}{4}\sum_{s_x,s_y,s_z=\pm 1} \Theta(\vec{a}_i\cdot \vec{v}_{s_xs_ys_z})\right)\\
    &=8-\sum_i\sum_{s_x,s_y,s_z=\pm 1} p_i\ \Theta(\vec{a}_i\cdot \vec{v}_{s_xs_ys_z})\\
    &=\sum_{s_x, s_y, s_z =\pm 1} (1- f_a(\vec{v}_{s_xs_ys_z})) \, ,
\end{align}
which explains the last step in Eq.~\eqref{secterm}. Hence, the sum of the two terms in Eq.~\eqref{twoterms} is exactly $A^{1/2}_{i|a}$ as required.

\subsection{Dividing the sphere into the eight octants}\label{integrals}
We divided the sphere into eight regions, according to the eight octants defined by the signs in the rotated coordinate system. Our goal here is to show that when we integrate all $\lam$ in one octant, the result is precisely  $G_{s_xs_ys_z}=(2\cdot \id+\vec{v}_{s_xs_ys_z}\cdot \vec{\sigma})/16$, where $\vec{v}_{s_xs_ys_z}$ is the vertex of the cube that points to the midpoint of the octant. We do the calculation first in the standard coordinate frame (without rotation) for the vector $\vec{v}_{+++}=(+1,+1,+1)^T$ and show that $G_{+++}=(2\cdot \id+\vec{v}_{+++}\cdot \vec{\sigma})/16$. Here we use spherical coordinates $\lam=(\sin{\theta}\cos{\phi},\sin{\theta}\sin{\phi},\cos{\theta})$ and integrate the parent POVM $G_{\vec{\lambda}}=\frac{1}{4\pi}(\id + \vec{\lambda}\cdot \vec{\sigma})$ over all $\lam$ in that octant. Hence, all $\lam$ that have only positive components. This is true if and only if $0\leq \phi \leq \pi/2$ as well as $0\leq \theta \leq \pi/2$:
\begin{align}
    G_{+++}&= \frac{1}{4\pi}\int_0^{\pi/2} \mathrm{d}\theta \int_{0}^{\pi/2} \mathrm{d}\phi\ (\id+\sigma_x \sin{\theta}\cos{\phi} + \sigma_y \sin{\theta}\sin{\phi} +\sigma_z \cos{\theta}) \sin{\theta}\\
    &=\frac{1}{8}\id+\frac{1}{16} \left(\sigma_x+\sigma_y+\sigma_z\right) \, .
\end{align}
However, in general, the integration is in a rotated frame. Nevertheless, the shape of an octant is always the same, only the position is rotated. Hence, by symmetry arguments, we can conclude that $G_{s_xs_ys_z}=(2\cdot \id+\vec{v}_{s_xs_ys_z}\cdot \vec{\sigma})/16$ holds for a general octant corresponding to the cube vertex $\vec{v}_{s_xs_ys_z}$.

\section{Special cases}\label{specialcases}
Here, we discuss some special cases and provide more illustrations. The case of two-outcome POVMs is already discussed in Appendix~\ref{appendixprl}.

\subsection{Three-outcome measurements}
Suppose that all the vectors $\vai$ lie in a plane. In particular, this is true if the POVM has only three outcomes since $p_1\vec{a}_1+p_2\vec{a}_2+p_3\vec{a}_3=\vec{0}$ can only be satisfied if all three vectors lie in the same plane. In that case, we can choose the basis such that $z'$ is orthogonal to the plane in which the vectors lie. With that choice of coordinate frame, we can observe that $f_a(\vec{v}_{s_xs_y+})=f_a(\vec{v}_{s_xs_y-})$ since $\vec{a}_i \cdot \vec{z}'=0$ and therefore the $z'$-component of $\vec{v}_{s_xs_ys_z}$ does not affect the value of $f_a(\vec{v}_{s_xs_ys_z})$. Together with $f_a(\vec{v}_{s_xs_ys_z})=f_a(\vec{v}_{-s_x-s_y-s_z})$ (note that $\vec{v}_{s_xs_ys_z}=-\vec{v}_{-s_x-s_y-s_z}$ and $f_a(\vec{x})=f_a(-\vec{x})$), we can denote $C_1:=f_a(\vec{v}_{+++})=f_a(\vec{v}_{++-})=f_a(\vec{v}_{--+})=f_a(\vec{v}_{---})$ and $C_2:=f_a(\vec{v}_{+-+})=f_a(\vec{v}_{+--})=f_a(\vec{v}_{-++})=f_a(\vec{v}_{-+-})$. As a consequence of property (2) in Lemma~\ref{lemmafunction}, we obtain $C_1+C_2\leq 2$. Now we can show that there always exists a rotation around the $z'$-axis such that both values $C_1$ and $C_2$ are smaller or equal to one. If we fix at the beginning a coordinate frame where both values are smaller than one, we can use precisely that frame. On the other hand, if one value (suppose $C_1$) is above one, the other one ($C_2$) is smaller than one. Now we rotate the coordinate axes $x'$ and $y'$ around the $z'$-axis. If we rotate by 90 degrees, we map the vector $\vec{v}_{+-+}$ to the vector $\vec{v}_{+++}$ and therefore in the rotated coordinate frame we obtain $C^{*}_1=f_a(\vec{v}^{*}_{+++})=f_a(\vec{v}_{+-+})=C_2\leq 1$. By the intermediate value theorem and since $f_a$ is continuous, there is a rotation (with less than 90 degrees) such that $C_1=f_a(\vec{v}^{*}_{+++})=1$ which implies that $C^{*}_2=f_a(\vec{v}^{*}_{+-+})\leq 1$ since $C^{*}_1+C^{*}_2\leq 2$ holds for each of these coordinate frames. In this way, we can construct a suitable coordinate system without relying on the theorem of Ref.~\cite{hausel_makai_szűcs_2000} but only on the intermediate value theorem.

\begin{figure}[hbt!]
    \centering
    \includegraphics[width=1.0\columnwidth]{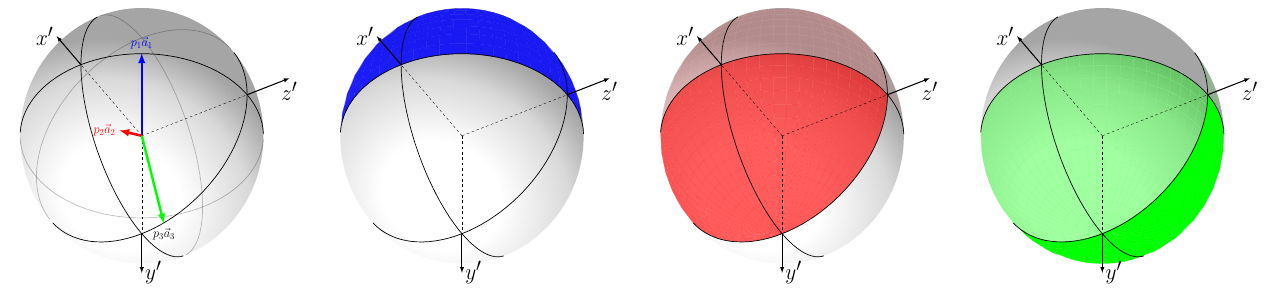}
    \caption{An illustration of $p(i|a,\lam)$ for a three-outcome POVM. Here the conditional probabilities do not depend on $z'$ due to the choice of the coordinate frame. If $\lam$ is close to one of the colored vectors it is also more likely that this color is produced as an output.}
\end{figure}


\subsection{SIC-POVM}\label{SICexample}
We also want to give an example with a four-outcome measurement, namely a SIC-POVM. We can represent a SIC POVM as follows:
\begin{align}
    \vec{a}_1&=\begin{pmatrix} 0 \\ 0 \\ 1 \end{pmatrix}&
    \vec{a}_2&=\begin{pmatrix} \sqrt{8}/3 \\ 0 \\ -1/3 \end{pmatrix}&
    \vec{a}_3&=\begin{pmatrix} -\sqrt{2}/3 \\ \sqrt{6}/3 \\ -1/3 \end{pmatrix}&
    \vec{a}_4&=\begin{pmatrix} -\sqrt{2}/3 \\ -\sqrt{6}/3 \\ -1/3 \end{pmatrix}
\end{align}
and the coefficients $p_i$ are $p_1=p_2=p_3=p_4=1/2$. It turns out, that we can use the standard coordinate frame and no rotation of the basis is necessary. In that basis, the eight vertices of the cube become $\vec{v}_{s_xs_ys_z}=(s_x,s_y,s_z)^T$ with $s_x,s_y,s_z \in \{+1,-1\}$ and we can indeed verify that $f_a(\vec{v}_{s_xs_ys_z})\leq 1$ for each $\vec{v}_{s_xs_ys_z}$. See the following table:
\begin{table}[hbt!]
    \centering
    \begin{tabular}{c|ccc||c|c|c|c|c|c|c|c||c|c}
         i & &$\vec{a}_i$& & + + + & + + - & + - + & + - - & - + + & - + - & - - + & - - - & $\sum$ & $\alpha_i$\\\hline\hline
        1 (blue) & (0.000, & 0.000, & $1.000)^T$ & 0.5 & 0 & 0.5 & 0 & 0.5 & 0 & 0.5 & 0 & 2 & 0  \\ \hline
        2 (red) & (0.943, & 0.000, & $-0.333)^T$ & 0.305 & 0.638 & 0.305 & 0.638 & 0 & 0 & 0 & 0 & 1.886 & 0.014  \\ \hline
        3 (green) & (-0.471, & 0.816, & $-0.333)^T$ & 0.006 & 0.339 & 0 & 0 & 0.477 & 0.811 & 0 & 0 & 1.633 & 0.046  \\ \hline
        4 (yelow) & (-0.471, & -0.816, & $-0.333)^T$ & 0 & 0 & 0.006 & 0.339 & 0 & 0 & 0.477 & 0.811 & 1.633 & 0.046  \\ \hline\hline
        & $f_a(\vec{v}_{s_xs_ys_z})$&$=\sum_i p_i$ &$ \Theta(\vec{v}_{s_xs_ys_z}\cdot \vec{a}_i)$& 0.811 & 0.977 & 0.811 & 0.977 & 0.977 & 0.811 & 0.977 & 0.811 & 7.152 &
    \end{tabular}
    \caption{Here we represent the functional values of $p_i\ \Theta(\vec{v}_{s_xs_ys_z}\cdot \vec{a}_i)$ for the chosen SIC-POVM (note that $p_i=1/2$ for every $i$). The last row gives the values for the function $f_a(\vec{v}_{s_xs_ys_z})=\sum_i p_i\ \Theta(\vec{v}_{s_xs_ys_z}\cdot \vec{a}_i)$ which is obtained by taking the sum over all $i$. We also calculate every $\alpha_i$ in the last column.}
    \label{tab:my_label}
\end{table}

\begin{table}[hbt!]
    \centering
    \begin{tabular}{c||c|c|c|c|c|c|c|c||c}
         i & + + + & + + - & + - + & + - - & - + + & - + - & - - + & - - - & $\sum$ \\\hline\hline
        1 (blue) & 0.5 & 0 & 0.5 & 0 & 0.5 & 0 & 0.5 & 0  & 2 \\ \hline
        2 (red) & 0.330 & 0.641 & 0.330 & 0.641 & 0.003 & 0.026 & 0.003 & 0.026  & 2 \\ \hline
        3 (green) & 0.088 & 0.349 & 0.082 & 0.010 & 0.487 & 0.893 & 0.010 & 0.082  & 2 \\ \hline
        4 (yellow) & 0.082 & 0.010 & 0.088 & 0.349 & 0.010 & 0.082 & 0.487 & 0.893  & 2 \\ \hline\hline
        & 1 & 1 & 1 & 1 & 1 & 1 & 1 & 1  &
    \end{tabular}
    \caption{Here we represent the conditional probabilities $p(i|a,\lam)$ given the octant in which $\lam=(\lambda_x,\lambda_y,\lambda_z)^T$ lies (denoted as $s_x s_y s_z$ where $s_k=\sgn(\lambda_k)$ for $k=x,y,z$). Intuitively speaking, they are the same values as in the table above but we fill the rest with noise. The last row is obtained by taking the sum over all $i$ which shows that the probabilities sum to one. A short calculation can also directly verify that $\sum_{s_x,s_y,s_z=\pm 1} p(i|a,\lam) \ G_{s_xs_ys_z}=A^{1/2}_{i|a}$.}
    \label{tab:my_label2}
\end{table}

\begin{figure}[hbt!]
    \centering
    \includegraphics[width=0.65\columnwidth]{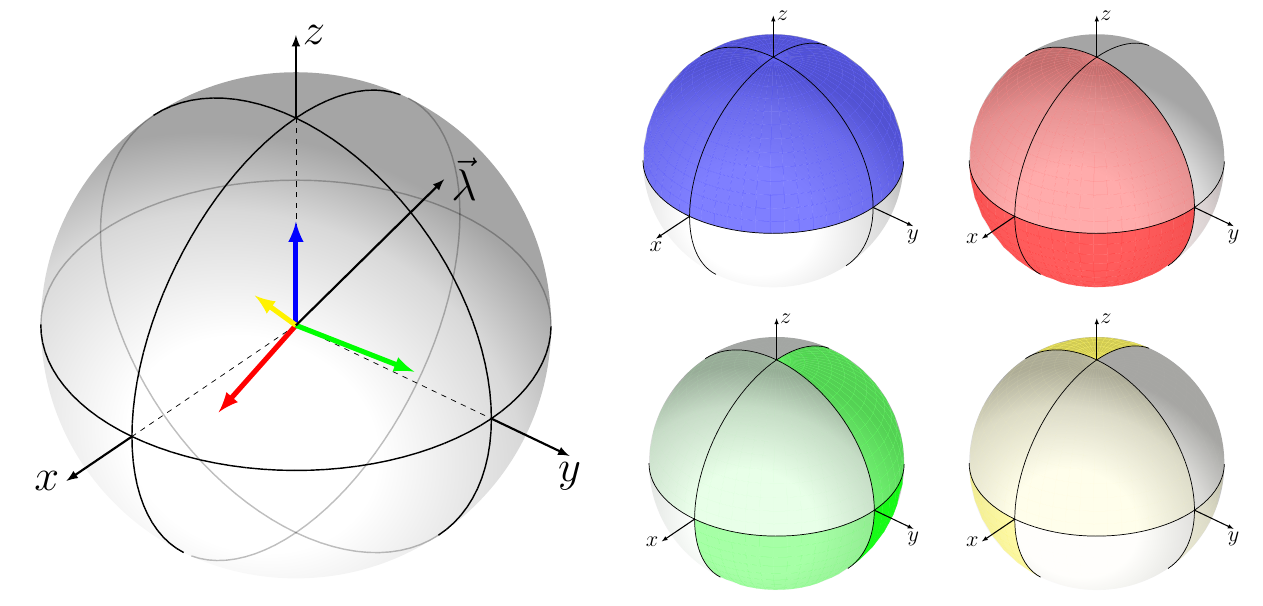}
    \caption{(similar figure as in the main text) The coloured arrows denote the vectors $p_i\cdot  \vec{a}_i$ according to $p_i=1/2$ and the vectors $\vai$ given above. The right part of the figure represents the conditional probabilities given in the table above. Here, $\lam$ lies in the octant that corresponds to "-++" ($s_x=-1$, $s_y=s_z=+1$) the outcome is $i=1$ (blue) with $p(1|a,\lam)=0.5$, $i=2$ (red) with $p(2|a,\lam)=0.003$, $i=3$ (green) with $p(3|a,\lam)=0.487$, and $i=4$ (yellow) with $p(4|a,\lam)=0.01$. Therefore, the outcome is most likely to be either $i=1$ or $i=3$.} \label{illSIC}
\end{figure}

\subsubsection{Every orthonormal basis is suitable for SIC-POVMs}\label{appsicevery}
For the construction of the SIC-POVM above, we simply used the standard coordinate frame. However, it turns out that any other choice of orthonormal basis (or any rotation of the cube) would be equally valid. Therefore, in a different coordinate frame, the functions for the conditional probabilities would change but a simulation is still possible. Indeed, we can prove that $f_a(\vec{x})\leq 1$ for every vector $\vec{x}$ with $|\vec{x}|\leq \sqrt{3}$. Therefore for every rotation of the cube, the vertices $\vec{v}_{s_xs_ys_z}$ satisfy $f_a(\vec{v}_{s_xs_ys_z})\leq 1$ since $|\vec{v}_{s_xs_ys_z}|=\sqrt{3}$. For the SIC-POVM the function becomes the following:
\begin{align}
    f_a(\vec{x})=\sum_i p_i\ \Theta(\vec{x}\cdot \vai) = \frac{\Theta(\vec{x}\cdot \vec{a}_1)+\Theta(\vec{x}\cdot \vec{a}_2)+\Theta(\vec{x}\cdot \vec{a}_3)+\Theta(\vec{x}\cdot \vec{a}_4)}{2}
\end{align}
Depending on the region where $\vec{x}$ lies, some of the terms $\vec{x}\cdot \vec{a}_i$ are positive and some of them are negative. We show that in any case, $f_a(\vec{x})\leq 1$ if $|\vec{x}|\leq \sqrt{3}$.
Suppose only the term $\vec{x}\cdot \vec{a}_1$ is positive and the remaining three terms $\vec{x}\cdot \vec{a}_i$ are negative (this happens for instance if $\vec{x}=(x,0,0)$). If this is the case, the function becomes 
\begin{align}
    f_a(\vec{x})=\frac{\Theta(\vec{x}\cdot \vec{a}_1)}{2}=\frac{\vec{x}\cdot \vec{a}_1}{2}\leq \frac{|\vec{x}|\cdot |\vec{a}_1|}{2}=\frac{|\vec{x}|}{2} \, ,
\end{align}
where we used the Cauchy-Schwarz inequality and $|\vec{a}_1|=1$. The same argument holds if one of the other terms is positive and the remaining three are negative. Now consider, that the two terms $\vec{x}\cdot \vec{a}_1$ and $\vec{x}\cdot \vec{a}_2$ are positive, and the remaining two are negative. Then the function becomes:
\begin{align}
    f_a(\vec{x})&=\frac{\Theta(\vec{x}\cdot \vec{a}_1)+\Theta(\vec{x}\cdot \vec{a}_2)}{2}=\frac{\vec{x}\cdot \vec{a}_1+\vec{x}\cdot \vec{a}_2}{2}=\frac{\vec{x}\cdot (\vec{a}_1+\vec{a}_2)}{2}\leq \frac{|\vec{x}|\cdot |\vec{a}_1+\vec{a}_2|}{2}=\frac{|\vec{x}|\cdot \sqrt{\frac{4}{3}}}{2}=\frac{|\vec{x}|}{\sqrt{3}} \, .
\end{align}
Here we used again the Cauchy-Schwarz inequality and note that $\vec{a}_1+\vec{a}_2=(\sqrt{8}/3,0,2/3)^T$ (hence $|\vec{a}_1+\vec{a}_2|=\sqrt{4/3}$). Due to symmetry reasons (or by a similar calculation), the same applies to any other combination of these four terms, in which exactly two of them are positive.

In the case where three terms are positive, we obtain similarly (note that $\vec{a}_1+\vec{a}_2+\vec{a}_3+\vec{a}_4=\vec{0}$):
\begin{align}
    f_a(\vec{x})&=\frac{\Theta(\vec{x}\cdot \vec{a}_1)+\Theta(\vec{x}\cdot \vec{a}_2)+\Theta(\vec{x}\cdot \vec{a}_3)}{2}=\frac{\vec{x}\cdot (\vec{a}_1+\vec{a}_2+\vec{a}_3)}{2}=\frac{\vec{x}\cdot (-\vec{a}_4)}{2}\leq \frac{|\vec{x}|\cdot |\vec{a}_4|}{2}=\frac{|\vec{x}|}{2} \, .
\end{align}
If none of the terms is positive, the function becomes $f_a(\vec{x})=0$. If all of the terms are positive, the function becomes also $f_a(\vec{x})=(\vec{x}\cdot (\vec{a}_1+\vec{a}_2+\vec{a}_3+\vec{a}_4))/2=0$ since $\vec{a}_1+\vec{a}_2+\vec{a}_3+\vec{a}_4=\vec{0}$. (However, there are no vectors except $\vec{x}=\vec{0}$ where either none or all of the terms are positive due to geometric arguments.) One can see that, no matter in which case we are, for every vector with $|\vec{x}|\leq \sqrt{3}$ the function satisfies $f_a(\vec{x})\leq 1$, and therefore every orthonormal frame (respectively, any rotation of the cube) can be chosen.

\end{document}